\newtheorem{proposition}{Proposition}[section]
\def\dist{\mathop{\rm dist}\nolimits}
\def\prox{\mathop{\rm prox}\nolimits}
\def\argmin{\mathop{\rm argmin}\nolimits}
\def\amp{\mathop{\;\:}\nolimits}
\newcommand{\bzero}{\boldsymbol{0}}
\newcommand{\bx}{\boldsymbol{x}}
\newcommand{\by}{\boldsymbol{y}}
\newcommand{\bD}{\boldsymbol{D}}
\newcommand{\bI}{\boldsymbol{I}}
\newcommand{\bW}{\boldsymbol{W}}
\newcommand{\bX}{\boldsymbol{X}}
\newcommand{\bbeta}{\boldsymbol{\beta}}
\newcommand{\bepsilon}{\boldsymbol{\epsilon}}
\newcommand{\btheta}{\boldsymbol{\theta}}
\newcommand{\Real}{\mathbb{R}}
\newcommand{\Tra}{^{\sf T}} 
\newcommand{\Inv}{^{-1}} 
\newcommand{\V}[1]{{\bm{\mathbf{\MakeLowercase{#1}}}}} 
\newcommand{\blind}{0}
\begin{document}

\def\spacingset#1{\renewcommand{\baselinestretch}%
{#1}\small\normalsize} \spacingset{1}


\if0\blind
{
  \title{\bf A Sharper Computational Tool for $\text{L}_2\text{E}$ Regression}
  \author{Xiaoqian Liu\\
    Department of Statistics, North Carolina State University\\
    Eric C. Chi\\
    Department of Statistics, 
    Rice University\\
     and \\
    Kenneth Lange\\ 
    Departments of Computational Medicine, Human Genetics, and Statistics\\
    University of California, Los Angeles}
    \date{}
  \maketitle
} \fi

\if1\blind
{ 
  \bigskip
  \bigskip
  \bigskip
  \begin{center}
    {\LARGE\bf A Sharper Computational Tool for $\text{L}_2\text{E}$ Regression}
\end{center}
  \medskip
} \fi

\bigskip
\begin{abstract}

Building on previous research of \cite{L2E}, the current paper revisits estimation in robust structured regression under the $\text{L}_2\text{E}$ criterion. We adopt the majorization-minimization (MM) principle to design a new algorithm for updating the vector of regression coefficients. Our sharp majorization achieves faster convergence than the previous alternating proximal gradient descent algorithm \citep{L2E}. In addition, we reparameterize the model by substituting precision for scale and estimate precision via a modified Newton's method. This simplifies and accelerates overall estimation. We also introduce distance-to-set penalties to enable  constrained estimation under nonconvex constraint sets. This tactic also improves performance in coefficient estimation and structure recovery. Finally, we demonstrate the merits of our improved tactics through a rich set of simulation examples and a real data application. 

\end{abstract}

\noindent%
{\it Keywords:} Integral squared error criterion; MM principle; Newton's method; penalized estimation; distance penalization
\vfill

\newpage
\spacingset{2} 

\section{Introduction}
\label{sec:intro}

Linear least squares regression quantifies the relationship between a response and a set of predictors. As such, it has been the most popular and productive technique of classical statistics.  The growing complexity of modern datasets necessitates special structures on the vector of regression coefficients. A typical example is sparse regression for high-dimensional data, where the number of predictors exceeds the number of responses. In this setting, assuming the coefficient vector is sparse not only improves a regression model's interpretability but also improves its prediction accuracy. The most popular vehicle for dealing with sparse regression is the least absolute shrinkage and selection operator (Lasso) \citep{Lasso}. Other examples of structured regression include isotonic regression \citep{isotonic}, convex regression \citep{cvxreg}, and ridge regression \citep{ridge}. 
 
Traditional structured regression estimates parameters by constrained least squares. Unfortunately, least squares estimates are extremely sensitive to outliers. A single outlier can ruin estimation accuracy. Consequently, robust structured regression has gained considerable traction in recent years. Numerous authors have contributed to the current body of techniques. To mention a few, \cite{robust-isotonic} propose a family of robust estimates for isotonic regression that replaces the least squares criterion with the M-estimation criterion \citep{M-estimate}. \cite{robust-convex} employ absolute error loss in robust convex regression.  This is also  an instance of an M-estimator. \cite{Extended-Lasso} suggest an extended Lasso method incorporating a stochastic noise term to account for corrupted observations in robust sparse multiple regression. \cite{sparse-trim} add a Lasso penalty to the least trimmed squares (LTS) loss to produce a robust sparse estimator that trims outliers by effectively minimizing the sum of squared residuals over a selected subset.
\cite{MD-lasso} adopt the minimum distance criterion to design a log-scaled loss function and propose the  minimum distance Lasso method for robust sparse regression. Other robust sparse regression methods can be found in \cite{LAD-Lasso, She2011, ES-Lassp}. 

The above works investigate robust structured regression on a case-by-case basis. \cite{trim-general} develop a family of trimmed regularized M-estimators with a wider focus but with the need to select the degree of trimming. Recently, \cite{L2E} derive yet another general framework for robust structured regression that simultaneously estimates regression coefficients as well as a precision parameter, which plays the same role as the trimming parameter in \cite{trim-general}.  \cite{L2E} use the $\text{L}_2\text{E}$ criterion \citep{scott1992} to quantify goodness-of-fit and a convex penalty to enforce structure.  Their algorithmic  framework solves the corresponding  optimization problem by block descent. Although the computational framework presented in \cite{L2E} is general, there is room for some nontrivial improvements. First, the proposed proximal gradient algorithm for updating both the regression coefficients and the precision parameter at each block descent iteration can be slow to converge. Second, the box constraint on the precision parameter introduces two additional hyper-parameters that must be specified. Finally, while \cite{L2E} focused on convex penalties and constraints, the framework that they introduced is not inherently limited to convex options and warrants extension to important nonconvex alternatives that impose desirable structures.

The limitations in \cite{L2E} just discussed motivate the current paper and its new contributions. First, we derive a majorization-minimization algorithm to accelerate the estimation of the regression coefficients. Second, we reparameterize the precision parameter to eliminate the box constraint. A simple one-dimensional approximate Newton's method quickly solves the resulting smooth unconstrained problem for updating precision. Finally, we demonstrate improved statistical performance by imposing nonconvex penalties. Specifically, we adopt distance-to-set penalties to improve estimation accuracy subject to structural constraints. These improvements do not compromise robustness.

The rest of this paper is organized as follows. In Section \ref{Sec2}, we review the $\text{L}_2\text{E}$ criterion, the majorization-minimization (MM) principle, and distance penalization. In Section \ref{Sec3}, we set up the optimization problem for robust structured regression under the $\text{L}_2\text{E}$ criterion. In Section \ref{Sec4}, we introduce strategies that improve the estimation techniques of \cite{L2E}. In Sections \ref{Sec5} and \ref{real_data}, we provide a rich set of simulation examples and a real data application to demonstrate the empirical performance of our new algorithms. We end with a discussion in Section \ref{Sec6}.

\section{Background}
\label{Sec2}

\subsection{The $\text{L}_2\text{E}$ Criterion}

Although traditionally used in nonparametric estimation, the $\text{L}_2\text{E}$ criterion, also known as the integrated squared error (ISE), can be exploited in parametric settings for robust estimation. Suppose the goal is to estimate a density function $f(\bx \mid \btheta)$, where the true parameter $\btheta_*$ is unknown. The $\text{L}_2\text{E}$ criterion seeks to estimate $\btheta$ by minimizing the $L_2$ distance between $f(\bx \mid \btheta)$ and $f(\bx \mid \btheta_*)$; thus
\begin{eqnarray}
    \hat{\btheta} & = & \argmin_{ \btheta} \int \left[ f(\bx \mid \btheta) - f(\bx \mid \btheta_*)\right]^2 \,d\bx \notag \\
    & = &\argmin_{\btheta}  \int f(\bx \mid \btheta)^2 \,d\bx - 2 \int f(\bx \mid \btheta) f(\bx \mid \btheta_*) \,d\bx + \int f(\bx \mid \btheta_*)^2 \,d\bx.   \label{L2-f}
\end{eqnarray}
The third integral in formula \eqref{L2-f} does not depend on $\btheta$ and can be excluded from the minimization. The second integral is the expectation of $f(\bx \mid \btheta)$ and can be approximated by an unbiased estimate, namely its sample mean. Therefore, an approximate  $L_2$ estimate of $\btheta$  is
\begin{eqnarray}
  \label{L2E-f}
    \hat{\btheta}_{\text{L}_2\text{E}} & = & \argmin_{\btheta}  
    \int f(\bx \mid \btheta)^2 \, d\bx - \frac{2}{n} \sum_{i=1}^n f(\bx_i\mid\btheta),
\end{eqnarray}
where $n$ denotes the sample size. The $\text{L}_2\text{E}$ represents  a trade-off between efficiency and robustness.  It is less efficient but more robust than the maximum likelihood estimate (MLE) \citep{L2E-Scott, warwick2005}. \cite{L2E} discuss in detail how the $\text{L}_2\text{E}$ estimator imparts robustness in  structured regression.

\subsection{The MM Principle}

The majorization-minimization principle \citep{lange2000optimization,lange2016mm} for minimizing an objective function $h(\btheta)$ involves two steps, a) majorization of $h(\btheta)$ by a surrogate function $g(\btheta \mid \btheta_k)$ anchored at the current iterate $\btheta_k$ and then b) minimization of $\btheta \mapsto g(\btheta \mid \btheta_k)$ to construct $\btheta_{k+1}$. The surrogate function $g(\btheta \mid \btheta_k)$ must satisfy the two requirements:
 \begin{eqnarray}
     h(\btheta_k) & = & g(\btheta_k \mid \btheta_k), \quad  \text{tangency}  \label{MM-cond1}\\
     h(\btheta) & \leq & g(\btheta \mid \btheta_k) \quad \text{for all}\, \: \btheta, \quad \text{domination}.  \label{MM-cond2}
 \end{eqnarray}
Under these conditions, the iterates enjoy the descent property $h(\btheta_{k+1}) \leq h(\btheta_k)$ as demonstrated by the relations 
\begin{eqnarray*}
h(\btheta_{k+1}) & \leq & g(\btheta_{k+1}\mid \btheta_k) \amp \leq \amp g(\btheta_k \mid \btheta_k) \amp = \amp h(\bm \theta_k),
\end{eqnarray*}
reflecting conditions \eqref{MM-cond1} and \eqref{MM-cond2}. Ideally, the MM principle converts a hard optimization problem into a sequence of easier ones. The key to success is the construction of a tight majorization that can be easily minimized. In some problems it is possible to construct a sharp majorization within a limited class of majorizers.  Figure \ref{fig: MM} depicts a sharp quadratic majorization that is best among all quadratic majorizations that share the same tangency point. Sharp majorization accelerates the convergence of a derived MM algorithm \citep{sharp-mm}. In practice, majorization can be done piecemeal by exploiting the convexity or concavity of the various terms comprising the objective. 
 
\begin{figure}[t]
\centering
{\includegraphics[width=10cm]{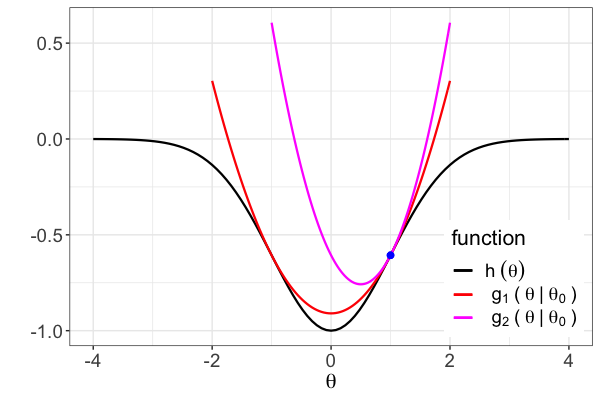}
}
\setlength{\abovecaptionskip}{-10pt} 
\caption{An example of sharp quadratic majorization. The quadratic $g_1(\theta \mid \theta_0)$ offers the sharpest majorization of the loss $h(\theta)$ and falls below every looser quadratic majorization $g_2(\theta \mid \theta_0)$.}
\label{fig: MM}
\end{figure}

\subsection{Distance Penalization}
\label{distance-penalization}

To estimate a parameter vector $\btheta$ subject to a set constraint $\btheta \in C$, it is often convenient to employ a squared Euclidean distance penalty \citep{dist-majorization, dist-to-set}.  For a closed set $C$, the penalty is defined as
\begin{eqnarray}
    \label{distance-penalty}
\frac{1}{2} \dist(\btheta, C)^2 & = & \min_{\bbeta \in C}\frac{1}{2}\|\btheta-\bbeta\|_2^2.
\end{eqnarray}
The beauty of this penalty is that it is majorized at the current iterate $\btheta_k$ by the spherical quadratic
\begin{eqnarray}
    \label{dist-penalty}
\frac{1}{2}\|\btheta - \mathcal{P}_C(\btheta_k)\|_2^2,
\end{eqnarray}
 where $\mathcal{P}_C(\btheta)$ denotes the Euclidean projection of $\btheta$ onto $C$ \citep{bauschke2011convex}. When $C$ is both closed and convex, $\mathcal{P}_C(\btheta)$ consists of a single point. For nonconvex sets, $\mathcal{P}_C(\btheta)$ sometimes consists of multiple points. When $\mathcal{P}_C(\btheta)$ is single valued, the distance penalty (\ref{distance-penalty}) has gradient $\btheta - \mathcal{P}_C(\btheta)$ 

The proximal distance method of constrained optimization minimizes the penalized objective $h(\btheta)+\frac{\rho}{2}\dist(\btheta, C)^2$ \citep{dist-to-set,ProxDist}. The tuning constant $\rho$ controls the trade-off between minimizing the loss $h(\btheta)$ and satisfying the constraint $\btheta \in C$. Under suitable  regularity conditions, the constrained solution can be recovered in the limit as $\rho$ tends towards infinity  \citep{dist-majorization, ProxDist}.
Therefore, a large value of $\rho$, say $10^8$, is chosen in practice to enforce the constraint. The MM principle suggests majorizing the distance penalty by the spherical quadratic (\ref{dist-penalty}) and applying the proximal map $\btheta_{k+1}=\prox_{\rho^{-1}h}[\mathcal{P}_C(\btheta_k)]$ to generate the next iterate. The proximal distance principle applies to a wide array of models, including sparse regression, nonnegative regression, and low-rank matrix completion. It is accurate in estimation  and avoids the severe shrinkage of Lasso penalization with well-behaved constraint sets \citep{dist-to-set}. \cite{fused-dist} extend distance penalization to fusion constraints of the form $\bD \bbeta \in C$ involving a fusion matrix $\bD$ such as a discrete difference operator. Although the advantages of proximal maps are lost, this extension brings more constrained statistical models under the umbrella of distance penalization. 
\section{$\text{L}_2\text{E}$ Robust Structured Regression}
\label{Sec3}

Consider the classical linear regression model $\by=\bX\bbeta+\tau^{-1}\bepsilon$, where $\by \in \Real^n$ is the response vector, $\bX \in \Real^{n \times p}$ is the design matrix of predictors, and $\bepsilon \in \Real^n$ is the noise vector with independent standard Gaussian components. The regression coefficients $\bbeta \in \Real^p$ and the precision $\tau \in \Real_+$ are the parameters of the model. Collectively, we denote the parameters by $\btheta=(\bm \beta\Tra, \tau)\Tra$. The density of the $i$th response $y_i$ amounts to
\begin{eqnarray*}
    \label{ydensity}
    f (y_i \mid \btheta) & = & \frac{\tau}{\sqrt{2\pi}} e^{-\frac{\tau^2 r_i^2}{2}},
\end{eqnarray*}
where $r_i = y_i - \bm x_i \Tra \bm \beta$ is the $i$th residual. A brief calculation shows that equation \eqref{L2E-f} gives rise to the $\text{L}_2\text{E}$ loss 
\begin{eqnarray}
\label{L2E-loss}
    h(\btheta) &= &\int f\left(y \mid \btheta\right)^2 d y-\frac{2}{n} \sum_{i=1}^n f\left(y_{i}\mid \btheta\right)
    \amp = \amp \frac{\tau}{2 \sqrt{\pi}}-\frac{\tau}{n} \sqrt{\frac{2}{\pi}} \sum_{i=1}^{n} e^{-\frac{\tau^{2}r_{i}^{2}}{2}}.
\end{eqnarray}

Structured regression introduces set constraints on the regression coefficient vector $\bbeta$. Consequently, $\text{L}_2\text{E}$ aims to solve the constrained optimization problem
\begin{eqnarray}
\label{objective1}
    \min_{\bbeta \in \Real^p, \tau\in \Real_+} h(\bbeta, \tau), ~~~~\text{subject to}~~
    \bm \beta \in C.
\end{eqnarray}
For example, $C=\{\bbeta \in \Real^p :\beta_1 \leq \beta_2 \leq \cdots \leq  \beta_p\}$ leads to a robust isotonic regression problem. Sparsity can be imposed directly by taking $C=\{\bm \beta \in \Real^p : \lVert \bm \beta \lVert_0 \leq k\}$ for some positive integer $k$ or indirectly by taking $C=\{\bm \beta \in \Real^p : \lVert \bm \beta \lVert_1 \leq t\}$ for $t>0$. Alternatively, we can rewrite  problem \eqref{objective1} as the non-smooth optimization problem
\begin{eqnarray}
\label{objective2}
      \min_{\bm \beta \in \Real^p, \tau\in \Real_+} h(\bm \beta, \tau) +
      \phi(\bbeta),
\end{eqnarray}
where the penalty $\phi(\bbeta)$ is either the 0/$\infty$ indicator of the constraint set $C$ denoted by $\iota_C(\bm \beta)$ or a better behaved but still non-smooth substitute such as the Lasso. Although we emphasize structured regression, the formulations \eqref{objective1} and \eqref{objective2} also include unstructured multivariate regression where $C=\Real^p$ and $\phi(\V \beta) \equiv 0$.

Solving problem \eqref{objective1}, or equivalently solving \eqref{objective2}, is challenging for two reasons. First, both problems  are nonconvex owing to the nonconvexity of the $\text{L}_2\text{E}$ loss \eqref{L2E-loss}. Second, the penalty term $\phi(\bm \beta)$ may be  non-differentiable. Fortunately, the block gradients of the $\text{L}_2\text{E}$ loss with respect to $\bbeta$ and $\tau$,  $\nabla_{\bbeta} h(\bbeta, \tau)$ and $\frac{\partial}{\partial \tau}h(\bbeta, \tau)$, are Lipschitz. This key property motivates a block descent algorithm \citep{L2E} that alternates between reducing the objective with respect to $\bbeta$ and $\tau$, holding the other block fixed. \cite{L2E} also impose the bounds $0 <\tau_{\min} \le \tau \le \tau_{\max}<\infty$ on $\tau$.

An appealing property of block descent is that the objective function  is guaranteed to decrease at each iteration.  \cite{L2E} apply proximal gradient descent to decrease the objective in each block update. Because the proximal gradient updates are based on a loose loss majorization, the algorithm is slow to converge.  To ameliorate this fault, we propose new strategies for updating $\bbeta$ and $\tau$ in the next section.

\section{Computational Methods}
\label{Sec4}

\subsection{Updating the Regression Coefficients}

Consider the problem of updating the regression coefficients $\bbeta$. Because the contribution  $-e^{-\tau^2r_i^2/2}$ to the $\text{L}_2\text{E}$ loss \eqref{L2E-loss} is differentiable and concave with respective to $r_i^2$, we can exploit the concave majorization 
\begin{eqnarray*}
\label{concavity}
    f(u) &\leq & f(u_k) + f'(u_k)(u-u_k)
\end{eqnarray*}
in the form 
\begin{eqnarray}
\label{majorization-r}
    -e^{-\tau^2r_i^2/2} &\leq& -e^{-\tau^2r_{ki}^2/2} +\frac{\tau^2}{2} e^{-\tau^2 r_{ki}^2/2}\left(r_i^2 - r_{ki}^2\right)
\end{eqnarray}
around the tangency point $r_{ki}^2$. By omitting irrelevant multiplicative and additive terms, this produces the surrogate function
\begin{eqnarray}
\label{beta-majorizer}
    f(\bbeta \mid \bbeta_k, \tau) & = & \frac{1}{2} \sum_{i=1}^n e^{-\tau^2 r_{ki}^2/2}(y_i-\bx_i\Tra \bbeta)^2
     \:\: = \:\: \frac{1}{2}\lVert \Tilde{\by} - \Tilde{\bX} \bbeta\lVert_2^2
\end{eqnarray}
for the $\text{L}_2\text{E}$ loss \eqref{L2E-loss}, where $r_{ki} = y_i - \bm x_i\Tra \bbeta_k$ is the $i$th residual at iteration $k$, $\tilde{\by} = \sqrt{\bW_k}\by$, $\tilde{\bX} = \sqrt{\bW_k}\bX$, and $\bW_k \in \Real^{n \times n}$ is a diagonal weight matrix with the $i$th diagonal entry $e^{-\tau^2 r_{ki}^2/2}$. 

The next proposition demonstrates that the surrogate (\ref{beta-majorizer}) is the sharpest quadratic majorization in the residual variables $r_i$. 
It does not claim that the majorization \eqref{beta-majorizer} is the sharpest multivariate quadratic majorization in the full variable $\bbeta$. Despite this fact, the majorization yields substantial gains in computational efficiency over the looser proximal gradient majorization pursued by \cite{L2E}.
\begin{proposition}
\label{prop1}
Let $f(r) = -e^{-a r^2}$ with $a>0$. Then the symmetric quadratic function $g(r) = -e^{-a r_k^2} + a e^{-a r^2} (r^2-r_k^2)$ is the sharp quadratic majorizer of $f(r)$. 
\end{proposition}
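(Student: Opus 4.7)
The plan is to recognize that $f$ depends on $r$ only through $u = r^2$, reducing the claim to a concavity argument on $\tilde f(u) := -e^{-au}$. Since $\tilde f''(u) = -a^2 e^{-au} < 0$, $\tilde f$ is strictly concave on $[0, \infty)$, so its tangent line at $u_k := r_k^2$,
\[
L(u) \;=\; \tilde f(u_k) + \tilde f'(u_k)(u - u_k) \;=\; -e^{-ar_k^2} + ae^{-ar_k^2}(u - u_k),
\]
dominates $\tilde f$ on $[0, \infty)$ with equality only at $u = u_k$. Substituting $u = r^2$ recovers the stated $g(r) = L(r^2)$, which at once yields the tangency $g(r_k) = f(r_k)$ and the global bound $g(r) \geq f(r)$. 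Matching derivatives is automatic: $g'(r_k) = 2ar_k e^{-ar_k^2} = f'(r_k)$ follows from the chain rule and the fact that both $f$ and $g$ are even. This verifies that $g$ is a valid symmetric quadratic majorizer.

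For sharpness, observe that every symmetric (even) quadratic in $r$ is an affine function of $u = r^2$. A competing symmetric quadratic majorizer $\tilde g$ touching $f$ at $r_k$ therefore has the form $\tilde g(r) = -e^{-ar_k^2} + \beta(r^2 - r_k^2)$, corresponding to the affine function $\tilde L(u) = -e^{-ar_k^2} + \beta(u - u_k)$. The domination $\tilde L(u) \geq \tilde f(u)$ for $u > u_k$ is equivalent to $\beta \geq (\tilde f(u) - \tilde f(u_k))/(u - u_k)$, and letting $u \to u_k^+$ on the right-hand side forces $\beta \geq \tilde f'(u_k) = ae^{-ar_k^2}$. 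When $r_k \neq 0$, the analogous inequality for $u < u_k$ produces the matching upper bound $\beta \leq \tilde f'(u_k)$, so $\beta$ is pinned down uniquely and $\tilde g = g$. Hence $g$ achieves the smallest admissible quadratic coefficient among symmetric quadratic majorizers tangent at $r_k$, which is exactly the sharpness claim.

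The main subtlety is making ``sharp'' precise: within the natural class of even quadratic majorizers that touch $f$ at $r_k$, being sharp means taking the smallest admissible leading coefficient, which coincides with the tangent-line slope $\tilde f'(u_k)$ of the concave reduction. The boundary case $r_k = 0$ needs a brief separate check since only the $u > u_k$ secant inequality is active, but the same lower bound $\beta \geq a = \tilde f'(0)$ emerges and the sharp choice is still $g$; the residual domination reduces to the elementary bound $e^{-ar^2} \geq 1 - ar^2$. Broadening the class to general non-symmetric quadratics matching both value and derivative at $r_k$ would permit smaller leading coefficients locally, so the symmetry hypothesis is essential to the conclusion.
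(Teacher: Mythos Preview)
Your majorization argument via concavity of $u\mapsto -e^{-au}$ is exactly the derivation the paper uses to obtain its inequality, so that part matches. The divergence is in how sharpness is established. The paper does not restrict to symmetric quadratics; it invokes a result of \cite{van2005} stating that a univariate quadratic majorizer of a differentiable function that touches the function at \emph{two} distinct points is automatically the sharp quadratic majorizer at either point. Since $g$ touches $f$ at both $r_k$ and $-r_k$, this yields sharpness among \emph{all} quadratic majorizers tangent at $r_k$, not just the even ones.

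Your argument proves less: you show that among even quadratics tangent at $r_k$ the coefficient $\beta$ is forced, so $g$ is the unique (hence trivially sharp) member of that subclass. You then assert that ``broadening the class to general non-symmetric quadratics \ldots would permit smaller leading coefficients,'' making symmetry essential. That assertion is false for $r_k\neq 0$, and it undercuts the proposition rather than proving it. Concretely, any quadratic $h$ tangent to $f$ at $r_k$ with leading coefficient $c'<c=2ae^{-ar_k^2}$ satisfies $g(r)-h(r)=\tfrac{c-c'}{2}(r-r_k)^2$, so $h(-r_k)<g(-r_k)=f(-r_k)$ and $h$ cannot majorize $f$. This two-point contact argument is precisely the content of the lemma the paper cites, and it closes the gap your proof leaves open. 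Your concavity-in-$u$ reduction is a clean way to exhibit $g$ and verify domination, but for sharpness you should replace the symmetric-class uniqueness argument (and drop the final remark) with the two-touch observation.
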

\begin{proof}
\cite{van2005} proves that a univariate quadratic function $g(r)$ majorizing a univariate differentiable function $f(r)$ and touching it at two points is  sharp. In the present case, $g(r)$ touches $f(r)$ at the points $r=\pm r_k$.
\end{proof}

For an $\text{L}_2\text{E}$ loss with penalty $\phi(\bbeta)$, the next MM iterate is 
\begin{eqnarray*}
\label{update-beta-MM}
  \bbeta_{k+1} & = & \argmin_{\bbeta \in \Real^p}~ \frac{1}{2}\lVert \Tilde{\by} - \Tilde{\bX} \bbeta\lVert_2^2+ \phi(\bbeta).
\end{eqnarray*}
In the setting of distance penalization with a fusion penalty, the surrogate reduces to the least squares criterion
\begin{eqnarray*}
\frac{1}{2}\left\| \begin{pmatrix}\Tilde{\by} \\ \sqrt{\rho}\mathcal{P}_C(\bD\bbeta_k)\end{pmatrix} - \begin{pmatrix}\Tilde{\bX} \\ \sqrt{\rho}\bD\end{pmatrix} \bbeta
\right\|_2^2,
\end{eqnarray*}
which is amenable to minimization by the QR algorithm or the conjugate gradient algorithm. The computational complexity of the $\bbeta$ update is dominated by this least squares problem. Indeed, computation of the current residuals, the matrix $\bW_k$, the product $\tilde{\by}$, and the product $\tilde{\bX}$ require, respectively, operation counts of $\mathcal{O}(np)$, $\mathcal{O}(n)$, $\mathcal{O}(n)$, and $\mathcal{O}(np)$. Updating $\bm \beta$ using proximal gradient descent requires similar steps. Evaluation of the proximal map of $\phi(\bbeta)$ reduces to penalized least squares with an identity design matrix. Hence, with a diagonal design matrix $\bX$, the computational cost per iteration of the current MM algorithm is essentially the same as that of the proximal gradient descent algorithm in \cite{L2E}. The numbers of iterations until convergence of the two algorithms are vastly different however. Additionally,  the distance penalized MM algorithm is more flexible in allowing nonconvex and fusion constraints.

\subsection{Updating the Precision Parameter}

There are two concerns in updating $\tau$, namely the slow convergence of proximal gradient descent and the presence of box constraints on $\tau$. To attack the latter concern, we reparameterize by setting $\tau = e^{\eta}$ for any real valued $\eta$. Because the stationary condition for minimizing the loss $h(\bbeta, e^{\eta})$ with respect to $\eta$ is intractable, we turn to a variant of Newton's method. The required first and second derivatives are
\begin{eqnarray*}
     \frac{\partial}{\partial \eta}h(\bm \beta, e^{\eta}) &=& \frac{e^{\eta}}{2\sqrt{\pi}}-\frac{e^{\eta}}{n}\sqrt{\frac{2}{\pi}}\sum_{i=1}^n w_i
     +\frac{e^{3\eta}}{n}\sqrt{\frac{2}{\pi}}\sum_{i=1}^n w_i r_{i}^{2} \\
     \frac{\partial^2}{\partial \eta^2}h(\bm \beta, e^{\eta})  &=& \frac{e^{\eta}}{2\sqrt{\pi}}+ \frac{4e^{3\eta}}{n}\sqrt{\frac{2}{\pi}}\sum_{i=1}^n w_ir_{i}^{2}
     -\frac{e^{\eta}}{n}\sqrt{\frac{2}{\pi}}\sum_{i=1}^n w_i
     -\frac{e^{5\eta}}{n}\sqrt{\frac{2}{\pi}}\sum_{i=1}^n w_ir_{i}^{4},
 \end{eqnarray*}
where $w_i =e^{-e^{2\eta} r_i^2/2}$ and $r_i$ is the $i$th residual. The Newton increment only points downhill when $\frac{\partial^2}{\partial \eta^2}h(\bm \beta, e^{\eta})$ is positive. This prompts discarding the negative contributions and relying on the approximation
\begin{eqnarray*}
\frac{\partial^2}{\partial \eta^2}h(\bbeta, e^{\eta}) & \approx &
d \amp=\amp \frac{e^{\eta}}{2\sqrt{\pi}}+ \frac{4e^{3\eta}}{n}\sqrt{\frac{2}{\pi}}\sum_{i=1}^n w_i r_{i}^{2}.
\end{eqnarray*}
Our modified Newton's iterates are defined by 
 \begin{eqnarray*}
     \label{newton-eta}
     \eta_{k+1} &=& \eta_k - t_kd_k^{-1} \frac{\partial}{\partial \eta}h(\bbeta, e^{\eta_k}),
 \end{eqnarray*}
where $t_k$ is a positive stepsize parameter chosen via Armijo backtracking started at $t_k=1$. Little backtracking is needed because replacing $\frac{\partial^2}{\partial \eta^2}h(\bbeta, e^{\eta})$ by the larger value $d$ diminishes the chances of overshooting the minimum of $h(\bbeta,e^\eta)$.

Our modified Newton's method enjoys the same computational complexity as proximal gradient descent. The dominant computational expense in updating $\eta$ in both algorithms  comes from computing the residuals $r_i$. This step requires $\mathcal{O}(np)$ operations. Once all $r_i$ are updated, computing the derivatives only requires an additional $\mathcal{O}(n)$ operations. In summary, our new strategy converges in fewer iterations, removes the box constraint on $\tau$, and enjoys the same computational cost per iteration as proximal gradient descent. 

Algorithm \ref{alg: BCD-MM} summarizes our algorithm for minimizing the penalized loss  \eqref{objective2}. As in \cite{L2E}, we set the maximum numbers of inner iterations for updating $\bbeta$ and $\eta$ to be $N_{\bbeta}$ and $N_{\eta}$, respectively, at each outer iteration. Extreme values $N_{\bbeta}$ and $N_{\eta}$ tend to slow overall convergence. In our simulation studies, we set $N_{\bbeta} = N_{\eta} = 100$. In the algorithm the notation $\bW_+$ signifies that $\bW$ depends on the previous inner iterate $\bbeta_+$. 

\begin{algorithm}[tbh]
\caption{Block descent with MM and approximate Newton for problem \eqref{objective2}}
\label{alg: BCD-MM}
 Initialize: $\bbeta_0 \in \Real^p$, $\tau_0 \in \Real_+$,  $N_{\bbeta}$, and $N_{\eta}$.
\baselineskip=12pt
 \begin{algorithmic}[1]
\FOR {$k = 1, 2, \cdots$}
	 \STATE $\bbeta^+ \leftarrow \bbeta_{k-1}$ 
	 \FOR {$i = 1, \cdots, N_{\bbeta}$}
	\STATE $\tilde{\by} = \sqrt{\bW_+}\by$ \\
	\STATE $\tilde{\bX} = \sqrt{\bW_+} \bX$ \\
	\STATE  $\displaystyle \bbeta^+ ~~=~~
    \argmin_{\bbeta \in \Real^p}~ \frac{1}{2}\lVert \Tilde{\by} - \Tilde{\bX} \bbeta\lVert_2^2
      + \phi(\bbeta)$
	\ENDFOR
	 \STATE $\bbeta_{k} \leftarrow \bbeta^+$
	 \STATE $\eta^+ \leftarrow \log(\tau_{k-1})$ 
	 \FOR{$i = 1, \cdots, N_{\eta}$}
	 \STATE 
	     $\displaystyle \eta^+ ~~=~~ \eta^+ - t_i d_i\Inv \frac{\partial}{\partial \eta}h(\bbeta_k, e^{\eta^+}) $
	 \ENDFOR
	 \STATE $\tau_k \leftarrow e^{\eta^+}$ 
\ENDFOR
\end{algorithmic} 
\end{algorithm}

We close this section by stressing the importance of the weight matrix  $\bW_+$ in the success of $\text{L}_2\text{E}$ regression. The diagonal entry $e^{-\tau^2 r_{+i}^2/2}$ of $\bW_+$ depends on the $i$th residual from the previous inner iterate $\bbeta_+$ and downweights case $i$ if its residual is large. The converged weights also conveniently flag outliers. We will exploit this bonus later in Section \ref{real_data}.

\section{Numerical Experiments}
\label{Sec5}

To compare the estimation accuracy and computational efficiency of Algorithm \ref{alg: BCD-MM} (abbreviated MM) and proximal gradient descent (abbreviated PG), we consider isotonic regression and convex regression. To highlight the advantages of distance penalization over competing model selection methods, we consider sparse regression and trend filtering. For the sake of brevity, we relegate two of the examples to the supplement. Readers wishing to implement our version of $\text{L}_2\text{E}$ regression should visit the eponymous \texttt{L2E} R package \citep{l2e_package} on the Comprehensive R Archive Network (CRAN).

\subsection{Robust Isotonic Regression}

Classical isotonic regression involves minimizing the least squares criterion 
\begin{eqnarray*}
\|\by-\bbeta\|_2^2 & = & \sum_{i=1}^n (y_i-\beta_i)^2
\end{eqnarray*}
subject to $\bbeta$ belonging to the set $C_1 =\{\bbeta \in \Real^n : \beta_1 \leq \cdots \leq \beta_n\}$. Independent standard normal errors are implicit in this formulation. Here the design matrix $\bX=\bI_n$, and the mean function of the model is monotonically increasing and piecewise constant. In the $\text{L}_2\text{E}$ version of the problem, we impose the $0/\infty$ penalty $\phi(\bbeta) = \iota_{C_1}(\bbeta)$. The MM update of $\bbeta$ succumbs to the \texttt{gpava} function in the \texttt{isotone} R package \citep{iso-package}. As mentioned earlier, the MM $\bbeta$ update enjoys the same per-iteration computational cost as the PG $\bbeta$ update \citep{L2E}. 

In our simulation, $1000$ responses are generated by sampling points $x_i$ evenly from $[-2.5,2.5]$ and setting $ y_i = x_i^3 + s_i + \varepsilon_i$, where the $\varepsilon_i$ are i.i.d.\@ standard normal deviates, and the $s_i$ shift the underlying cubic signal. The responses define mean vector  $\bbeta \in \Real^{1000}$. Outliers are introduced at consecutive responses by setting $s_i = 14$ for $i=251, 252, \cdots, 250+m$, where $m$ is the number of outliers; all other responses have $s_i = 0$. The shift of 14 makes the contaminated responses match the maximum observed value in the uncontaminated responses. Each method is tested over 100 replicates and initialized by $\bbeta_0 = \by$, $\tau_0 = \text{MAD}(\by)^{-1}$ for PG, and $\eta_0 = -\log[\text{MAD}(\by)]$ for MM, where $\text{MAD}(\by)$ is the reciprocal of the median absolute deviation of the responses.

\begin{figure}[t]
\centering
{\includegraphics[width=7cm]{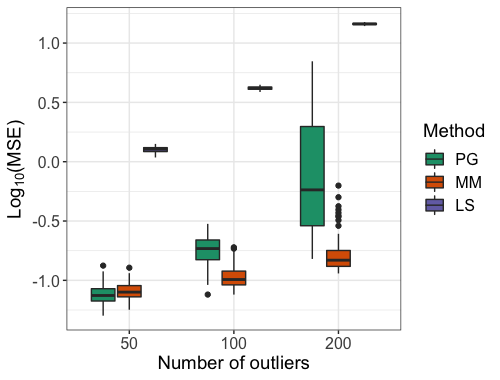}
}
\quad
{\includegraphics[width=7cm]{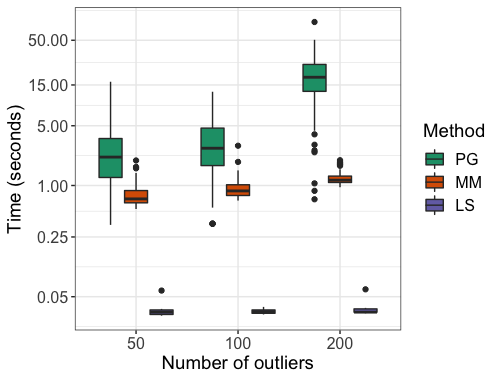}
  \setlength{\abovecaptionskip}{-10pt}
\caption{Simulation results for isotonic regression under different numbers of outliers. Boxplots depict the MSE (left panel) and run time (right panel) over $100$ replicates. 
}
\label{fig: isotonic}}
\end{figure}

Figure \ref{fig: isotonic} displays boxplots of the MSEs and run times in seconds in fitting the isotonic regression model under different numbers of outliers. We include the results from ordinary least squares (abbreviated LS) as a baseline. As anticipated, the estimation accuracy of LS degrades as the number of outliers increases. In contrast, both MM and PG exhibit much more modest increases in estimation error, with MM less sensitive to outliers than PG. Note that the optimization  problems of PG and MM differ slightly. We put a box constraint on $\tau$ for PG but reparameterize $\tau$ as $\tau=e^{\eta}$ for MM to eliminate the box constraint on $\tau$. For sufficiently large box constraints, the solutions to the two problems coincide, but differences in the algorithms will still  produce different algorithm iterate trajectories. As discussed in Section \ref{Sec3}, the $\text{L}_2\text{E}$ optimization problem is nonconvex and may exhibit multiple local minima. Thus, PG and MM may converge to different minima and produce different MSEs.

\begin{figure}[t]
\centering
{\includegraphics[width = 18cm]{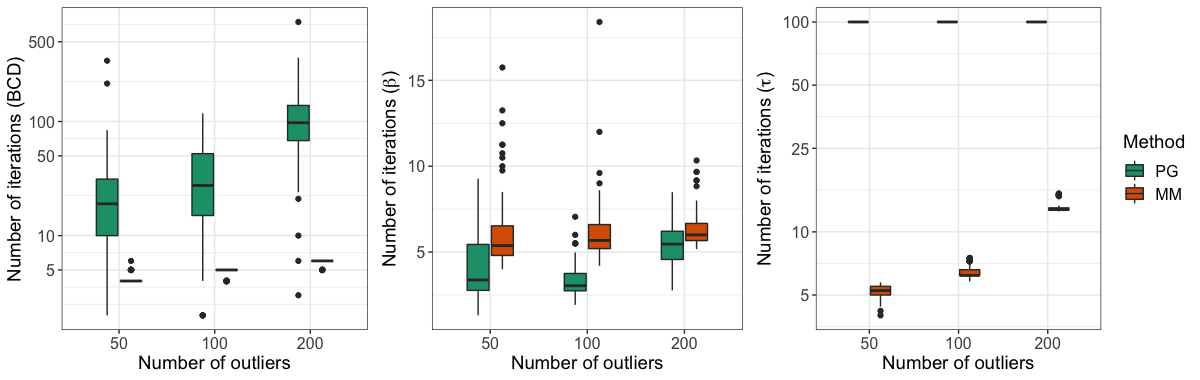}}
\setlength{\abovecaptionskip}{-25pt}
\caption{Boxplots of the mean number of outer block descent iterations (left panel), the mean number of inner iterations for updating $\bbeta$ per outer iteration (middle panel), and  the mean number of inner iterations for updating $\tau$ per outer iteration (right panel). All plots refer to the experiment summarized in Figure \ref{fig: isotonic}. 
}
\label{fig: isotonic-2}
\end{figure}

The right panel of Figure \ref{fig: isotonic} shows the significant speed advantage of MM over PG. Run times of PG increase rapidly as the number of outliers increases, while run times of MM are far more stable against the number of outliers. MM is less computationally efficient than LS, which avoids computation of case weights. The difference in run time between PG and MM is directly attributable to MM's reduced number of outer iterations until convergence. For the same experiment, Figure \ref{fig: isotonic-2} depicts boxplots of the mean number of outer block descent iterations, the mean number of inner iterations for updating $\bbeta$ per outer iteration, and the mean number of inner iterations for updating $\tau$ per outer iteration. Note that in our implementation, we terminate the inner iterations for updating $\bbeta$ and $\tau$ if certain convergence conditions are satisfied. Readers may refer to the \texttt{L2E} package for details. It may seem paradoxical that PG takes fewer inner iterations than MM to update $\bbeta$. However, recall that PG is fitting a less snug surrogate than MM. PG also takes far more inner iterations than MM to update $\tau$. This reflects the speed of our approximate Newton method.

The robust isotonic simulations also illustrate the ability of $\text{L}_2\text{E}$ regression to handle outliers under various contamination levels. To explore this tendency, we fix the number of outliers at $m=100$, vary the shifts $s_i$ over the grid $\{2, 5, 8, 14, 20\}$,  adopt the same initialization as the previous experiment, and run $100$ replicates for each scenario. Figure \ref{fig: isotonic-shifts} summarizes the estimation and computation performance of PG, MM, and LS under different contamination levels. When the data are only slightly contaminated ($s_i=2$), the two robust methods, PG and MM, fail to detect the outliers and achieve estimation accuracy comparable to LS. However, as the level of contamination $s_i$ grows, the MSE of LS increases rapidly, while the MSE of MM behaves robustly and quickly declines. Interestingly, the MSE of PG decreases gradually as the shift grows. These results suggest that both PG and MM need a certain level of contamination to successfully detect outliers. MM is more responsive to the contamination than PG even if the data are modestly contaminated. This is yet another advantage of MM over PG.

\begin{figure}[t]
\centering
{\includegraphics[width=7.5cm]{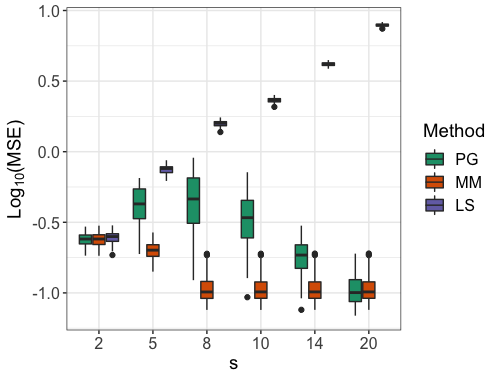}
}
\quad
{\includegraphics[width=7.5cm]{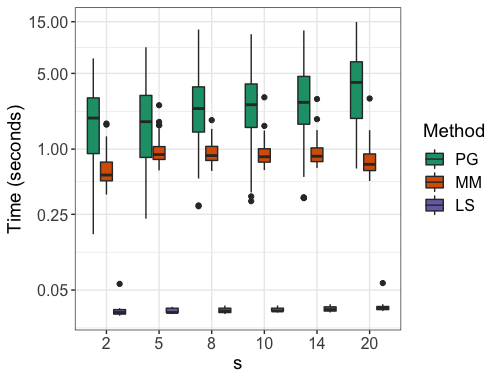}
 \setlength{\abovecaptionskip}{-10pt}
\caption{Simulation results for isotonic regression under different contamination  levels. Boxplots depict the MSE (left panel) and run time (right panel) over $100$ replicates.
}
\label{fig: isotonic-shifts}}
\end{figure}

\begin{figure}[t]
\centering
{\includegraphics[width = 18cm]{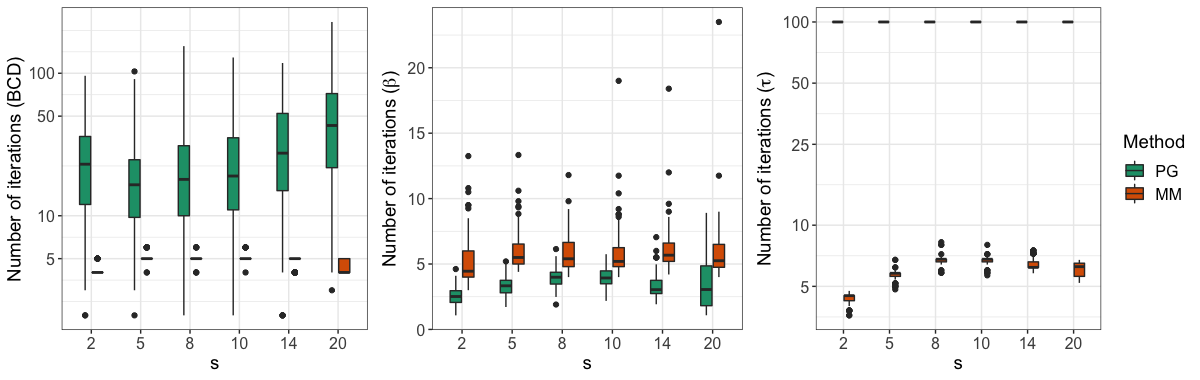}}
  \setlength{\abovecaptionskip}{-38pt}
\caption{Boxplots of the mean number of outer block descent iterations (left panel), the mean number of inner iterations for updating $\bbeta$ per outer iteration (middle panel), and  the mean number of inner iterations for updating $\tau$ per outer iteration (right panel). 
All plots refer to the experiment summarized in Figure \ref{fig: isotonic-shifts}. 
}
\label{fig: isotonic-shifts2}
\end{figure}

The right panel of Figure \ref{fig: isotonic-shifts} illustrates how PG's run times increase as the contamination level increases. The run times of MM, however, are stable with contamination level and consistently shorter than those of PG, though longer than those of LS.  Figure \ref{fig: isotonic-shifts2} explains the difference in the computational performance between PG and MM. The numbers of inner iterations for updating $\bbeta$ and $\tau$ for both PG and MM are insensitive to contamination level.  MM's number of outer block descent iterations is always small, while PG's number of outer iterations increases. This difference explains the speed advantage of MM.

\subsection{Robust Sparse Regression}

Sparse linear regression minimizes the penalized least squares criterion
\begin{eqnarray*}
\frac{1}{2}\|\by-\bm X \bbeta\|_2^2 + \phi(\bbeta),
\end{eqnarray*}
with $\phi(\bbeta)$ promoting sparsity. Typical choices of $\phi(\bbeta)$ includes the Lasso and the nonconvex MCP penalty \citep{MCP}. In the $\text{L}_2\text{E}$ framework, each MM update solves  a $\phi$-penalized least squares problem. The \texttt{ncvfit} function in the R package \texttt{ncvreg} is ideal for this purpose \citep{ncvreg}. In the distance penalty context, the constraint set is  $C_2 = \{ \bbeta \in \Real^p: \lVert \bbeta \lVert_0 \leq k\}$, where the positive integer $k$ encodes the sparsity level. The MM update of $\bbeta$ relies on the proximal distance principle and reduces to least squares.

To shed light on the statistical performance of $\text{L}_2\text{E}$ regression with Lasso, MCP, and distance penalties, we undertake a small simulation study involving a sparse coefficient vector $\bbeta=(1, 1, 1, 1, 1,0, \cdots, 0)^{\Tra} \in \Real^{50}$ and a design matrix $\bX \in \Real^{200 \times 50}$ whose independent entries are standard Gaussian deviates. The response $\bm y$ is simulated as $\bm y = \bm X \bm \beta + \bm \epsilon$ where components of $\bm \epsilon$ are standard normal noises. We then shift the first $m$ entries of $\by$ and the first $m$ rows of $\bm X$ by $5$ to produce observations that are outlying with respect to the responses and also high leverage with respect to the predictors. The number of outliers $m$ is chosen from the grid $\{10, 20, 30, 50\}$. For the distance penalization, the ideal choice of the sparsity parameter $k$ is $5$. We employ five-fold cross-validation to select the tuning parameters for all three penalties. The sparsity level $k$ for distance penalization is varied over the grid $\{3,5,7,9, 11, 13, 15\}$, and the penalty constant $\rho$ is set to $10^8$ to enforce the desired sparsity as discussed in Section \ref{distance-penalization}. We initialize $\text{L}_2\text{E}$ estimation by setting $\bbeta_0 = \bm 0$ and $\eta_0 = -\log[\text{MAD}(\by)]$. All performance metrics depend on $100$ replicates. These metrics include: (a) estimation accuracy (measured by the relative error compared to the true $\bbeta$), (b) support recovery (measured by the F1 score), (c) the number of true positives, and (d) the number of false positives. The F1 score (harmonic mean of precision and recall) accounts for both true and false positives and takes on values in $[0,1]$, with a higher score indicating better support recovery.

\begin{figure}[t]
\centering
{\includegraphics[width=7cm]{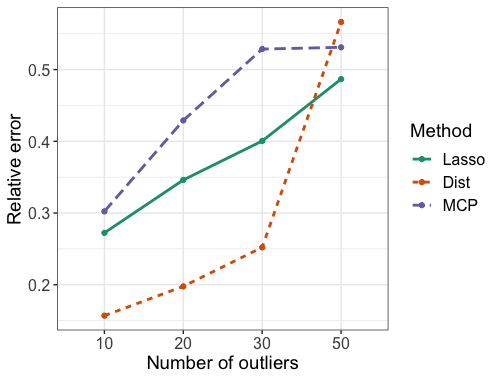}
}
\quad
{\includegraphics[width=7cm]{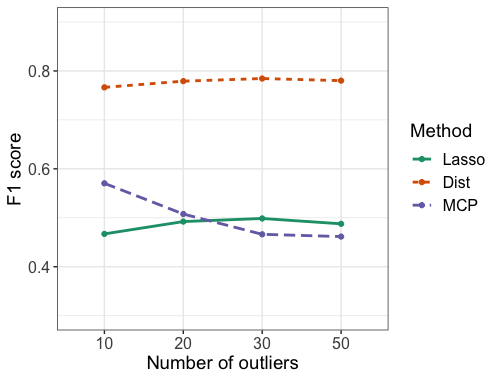}
}
{\includegraphics[width=7cm]{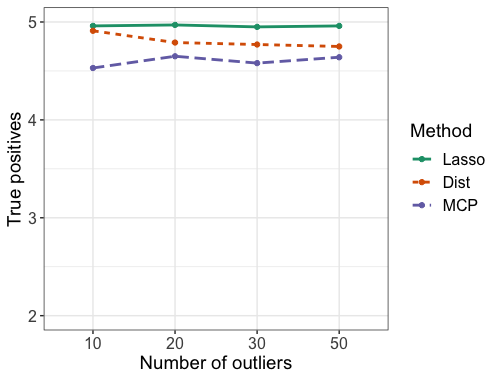}
}
\quad 
{\includegraphics[width=7cm]{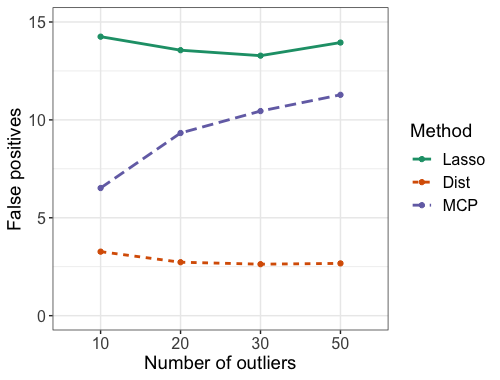}
}
  \setlength{\abovecaptionskip}{-5pt}
\caption{Simulation results for sparse regression under different numbers of outliers. Average performance based on $100$ replicates for each method.}
\label{fig: sparse}
\end{figure}

Figure \ref{fig: sparse} shows the performance of the Lasso,  MCP, and  distance penalties in robust sparse regression with the $\text{L}_2\text{E}$ loss under different numbers of outliers. Estimation degrades for all three methods as the number of outliers increases. Distance penalization consistently achieves a lower relative error than Lasso and MCP, except for $m=50$, where all methods produce unacceptable estimates. In support recovery, distance penalization consistently delivers a much higher F1 score than Lasso and MCP. The two plots in the bottom row of Figure \ref{fig: sparse} highlight the difference in support recovery among the three methods. Lasso identifies the most true positives but suffers from the most false positives in each scenario. MCP selects fewer irrelevant variables compared to Lasso but misses some true positives. In contrast, distance penalization identifies a number of true positives comparable to Lasso while maintaining a much lower false positive rate.

\begin{figure}[t]
\centering
{\includegraphics[width=7cm]{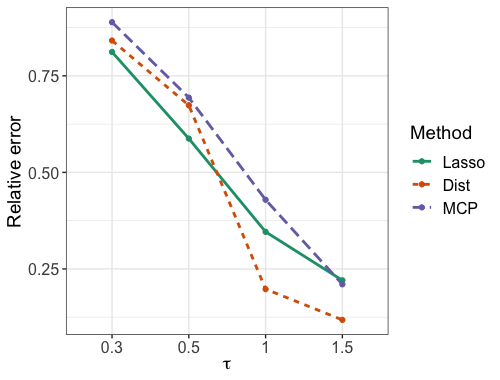}
}
\quad
{\includegraphics[width=7cm]{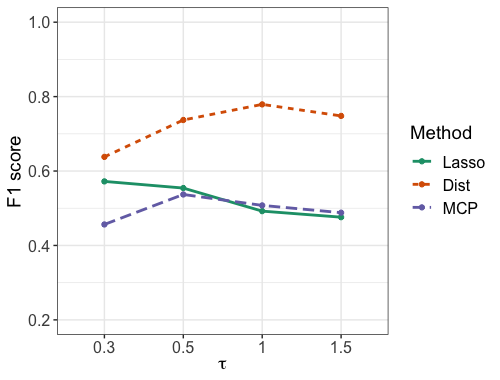}
}
{\includegraphics[width=7cm]{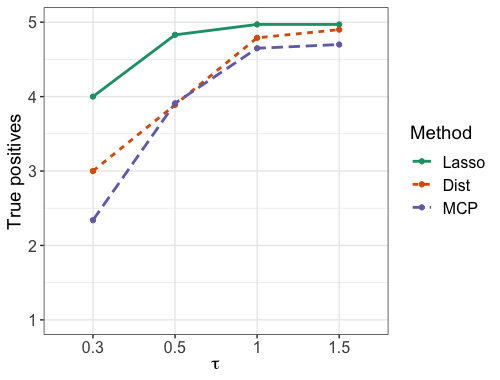}
}
\quad
{\includegraphics[width=7cm]{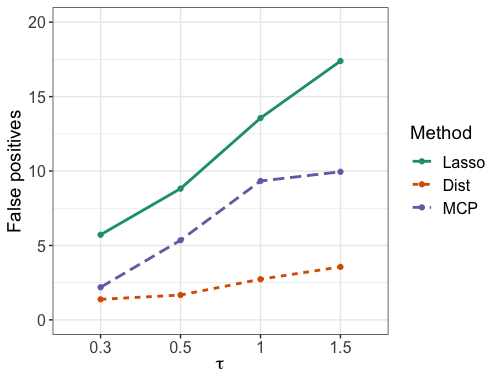}
}
  \setlength{\abovecaptionskip}{-10pt}
\caption{Simulation results for sparse regression under different noise levels. Average performance based on $100$ replicates for each method.}
\label{fig: sparse-snr}
\end{figure}

In the second experiment, we compare the performance of the different analysis methods (Lasso, MCP, and distance penalization) under different noise levels. We fix the number of outliers at $m=20$ and sample the precision parameter $\tau$ over the grid $\{0.3, 0.5, 1, 1.5\}$. A small value of $\tau$ represents a high noise level. We use the rules of our first experiment to produce outliers, select tuning parameters, and initialize $\text{L}_2\text{E}$ estimation. Figure \ref{fig: sparse-snr} summarizes our analysis results under different noise levels. As expected, the estimation errors of all three methods decrease as the value of $\tau$ increases. Distance penalization outperforms Lasso and MCP in estimation accuracy when the noise level is relatively low ($\tau \geq 1$). In addition, distance penalization compares favorably with Lasso and MCP in F1 score across different noise levels. The plots of true and false positives provide detailed insight into the support recovery of the different methods. All methods achieve a larger number of true positives as the value of $\tau$ increases, with Lasso leading the others. However, Lasso is plagued by an increasingly large number of false positives as the value of $\tau$ increases. Distance penalization achieves a smaller number of false positives, is less sensitive to the noise than Lasso and MCP, and stands out among the three methods in support recovery. This sparse regression example emphasizes the flexibility of $\text{L}_2\text{E}$ regression in accommodating different penalization methods and the advantages of distance penalization in both estimation accuracy and structure recovery.

\section{Real Data Application}
\label{real_data}

To illustrate the application of $\text{L}_2\text{E}$ regression in unconstrained robust multivariate regression and its effectiveness in detecting outliers, we now turn to the Hertzsprung-Russell diagram data of star cluster CYG OB1 investigated in \cite{rousseeuw2005robust, L2E-Scott, scott2021robust}. This data set includes two variables collected from $47$ stars in the direction of Cygnus. The predictor variable is the logarithm of the temperature at the star's surface, and the response variable is the logarithm of its light intensity. Though small, this data set is commonly used in robust regression owing to its four known outliers -- four bright giant stars observed at low temperatures \citep{star-data}. 

In this example, the penalty term $\phi(\bm \beta)=0$. Therefore, the MM update of $\bm \beta$ reduces to a standard least squares problem solvable by many efficient algorithms. In our implementation, we invoke the \texttt{lm} function in the R package \texttt{stats} \citep{R}. We initialize $\bbeta_0=\bm 0$ and  $\eta_0 = -\log[\text{MAD}(\by)]$. The left panel in Figure \ref{fig: real_data} displays the fitted $\text{L}_2\text{E}$ regression model. In comparison with ordinary least squares, $\text{L}_2\text{E}$ successfully reduces the influence of the four outliers and fits the remaining data points well. The converged weights $w_i = e^{-\tau^2r_i^2/2}$, where $r_i$ denotes the $i$-th $\text{L}_2\text{E}$ residual, serve as a diagnostic tool to detect outliers. As discussed in Section \ref{Sec4}, a small weight suggests a potential outlier. The histogram of the logarithm of weights in the right panel in Figure \ref{fig: real_data} clearly identifies the four outliers. These are colored in red in the scatter plot in the left panel. As a practical matter, we tried different initializations of $\bbeta$ in $\text{L}_2\text{E}$ estimation. Different initial values could potentially lead to different estimates. A direct and simple way to compare initializations is to rank their converged $\text{L}_2\text{E}$ losses \eqref{L2E-loss}. In this real data example, the neutral initialization $\bbeta_0=\bzero$ yields the smallest $\text{L}_2\text{E}$ loss.

\begin{figure}[t]
\centering
{\includegraphics[width=8cm]{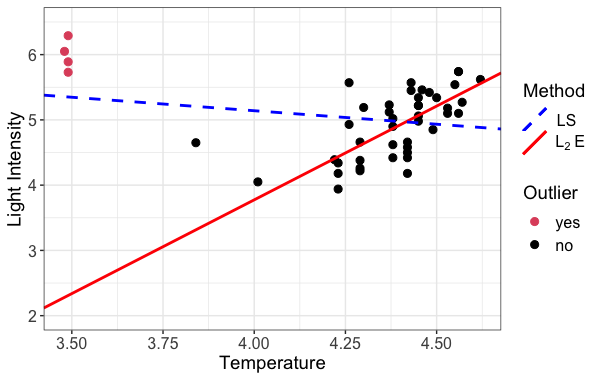}
}
{\includegraphics[width=6.65cm]{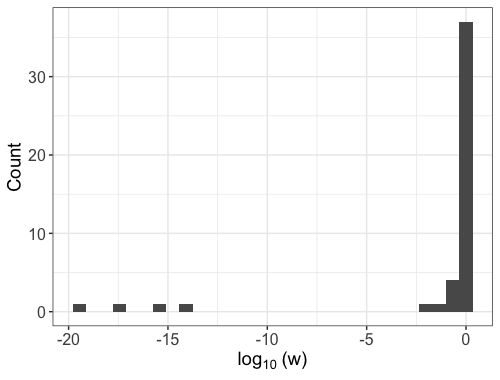}
}
  \setlength{\abovecaptionskip}{-10pt}
\caption{Fitted regression models from  $\text{L}_2\text{E}$ and LS for the Hertzsprung-Russell Diagram Data (left panel). The four known outliers are successfully detected by the $\text{L}_2\text{E}$ according to the histogram of the resulting weights (right panel).}
\label{fig: real_data}
\end{figure}

\section{Discussion}
\label{Sec6}

Because robust structured regression is resistant to the undue influence of outliers, it is valuable in many noisy data applications. The $\text{L}_2\text{E}$ computational framework \citep{L2E} for robust structured regression has the advantage of allowing the simultaneous estimation of regression coefficients and precision. 
This paper retains the overall strategy of block descent but introduces several non-trivial improvements. We introduce an MM algorithm based on a sharp majorization to accelerate convergence.  Each MM update of $\bbeta$ reduces to penalized least squares and can be readily handled by existing regression solvers. Although this plug-and-play tactic already formed part of the proximal gradient algorithm in \cite{L2E}, our tight majorization leads to better results. We also reparameterize  precision to avoid box constraint and update the new precision parameter by an approximate Newton's method. The computational cost per iterate remains the same, but again the number of iterations until convergence drops considerably. Finally, we extend penalization to distance and nonconvex penalties. These steps lead to  better statistical performance and model selection.

We demonstrate the merits of our refined computational framework through a rich set of simulation examples, including isotonic regression, convex regression, sparse regression, and trend filtering, and a real data  application to unconstrained multivariate regression. Given the same penalties, our simulation results show that the new algorithms outperform the original ones in both computational speed and estimation accuracy. Distance penalties to sparsity sets, in particular, show competitive advantages in both estimation accuracy and model selection. The real data example illustrates the convenience of using  the refined framework to identify outliers. Overall, the innovations introduced here make $\text{L}_2\text{E}$ an attractive tool for robust regression.


\section*{Supplementary Material}
Supplementary materials and code for this article are available online.
The supplement.pdf file contains the two simulation examples of convex regression and trend filtering under the $\text{L}_2\text{E}$ criterion. The L2E-code.zip file includes code for implementing the $\text{L}_2\text{E}$ isotonic regression and reproducing Figures 2 and 3 in the paper. To implement other $\text{L}_2\text{E}$ regression methods in the article, we refer readers to the eponymous \texttt{L2E} R package on the CRAN. 

\section*{Acknowledgement}

The authors are grateful to the editor, the associate editor, and the two referees for their helpful comments and suggestions. The authors thank Lisa Lin for her help with the R package.

\section*{Funding}
Lange's work is supported by the United States Public Health Service (USPHS) grants GM53275 and HG006139. Chi's work is partly supported by the National Science Foundation (NSF) grant DMS-2201136 and National Institutes of Health (NIH) grant R01GM135928.

\section*{Conflict of interest}

The authors report there are no competing interests to declare.







\bibliographystyle{jasa3}

\bibliography{main}

\begin{thebibliography}{35}
\newcommand{\enquote}[1]{``#1''}
\expandafter\ifx\csname natexlab\endcsname\relax\def\natexlab#1{#1}\fi
\expandafter\ifx\csname url\endcsname\relax
  \def\url#1{{\tt #1}}\fi
\expandafter\ifx\csname urlprefix\endcsname\relax\def\urlprefix{URL }\fi

\bibitem[\protect\citeauthoryear{Alfons, Croux, and Gelper}{Alfons
  et~al.}{2013}]{sparse-trim}
Alfons, A., Croux, C., and Gelper, S. (2013), \enquote{Sparse least trimmed
  squares regression for analyzing high-dimensional large data sets,} {\em The
  Annals of Applied Statistics\/}, 226--248.

\bibitem[\protect\citeauthoryear{Alvarez and Yohai}{Alvarez and
  Yohai}{2012}]{robust-isotonic}
Alvarez, E.~E. and Yohai, V.~J. (2012), \enquote{M-estimators for isotonic
  regression,} {\em Journal of Statistical Planning and Inference\/}, 142,
  2351--2368.

\bibitem[\protect\citeauthoryear{Barlow and Brunk}{Barlow and
  Brunk}{1972}]{isotonic}
Barlow, R.~E. and Brunk, H.~D. (1972), \enquote{The isotonic regression problem
  and its dual,} {\em Journal of the American Statistical Association\/}, 67,
  140--147.

\bibitem[\protect\citeauthoryear{Bauschke and Combettes}{Bauschke and
  Combettes}{2011}]{bauschke2011convex}
Bauschke, H.~H. and Combettes, P.~L. (2011), {\em Convex Analysis and Monotone
  Operator Theory in Hilbert Spaces\/}, volume 408, Springer.

\bibitem[\protect\citeauthoryear{Blanchet, Glynn, Yan, and Zhou}{Blanchet
  et~al.}{2019}]{robust-convex}
Blanchet, J., Glynn, P.~W., Yan, J., and Zhou, Z. (2019), \enquote{Multivariate
  distributionally robust convex regression under absolute error loss,} {\em
  Advances in Neural Information Processing Systems\/}, 32, 11817--11826.

\bibitem[\protect\citeauthoryear{Breheny and Huang}{Breheny and
  Huang}{2011}]{ncvreg}
Breheny, P. and Huang, J. (2011), \enquote{Coordinate descent algorithms for
  nonconvex penalized regression, with applications to biological feature
  selection,} {\em Annals of Applied Statistics\/}, 5, 232--253.

\bibitem[\protect\citeauthoryear{Chi, Zhou, and Lange}{Chi
  et~al.}{2014}]{dist-majorization}
Chi, E.~C., Zhou, H., and Lange, K. (2014), \enquote{Distance majorization and
  its applications,} {\em Mathematical Programming\/}, 146, 409--436.

\bibitem[\protect\citeauthoryear{Chi and Chi}{Chi and Chi}{2022}]{L2E}
Chi, J.~T. and Chi, E.~C. (2022), \enquote{A User-Friendly Computational
  Framework for Robust Structured Regression with the L$_2$ Criterion,} {\em
  Journal of Computational and Graphical Statistics\/}, 1--12.

\bibitem[\protect\citeauthoryear{de~Leeuw, Hornik, and Mair}{de~Leeuw
  et~al.}{2010}]{iso-package}
de~Leeuw, J., Hornik, K., and Mair, P. (2010), \enquote{Isotone optimization in
  R: Pool-adjacent-violators algorithm (PAVA) and active set methods,} {\em
  Journal of Statistical Software\/}, 32, 1--24.

\bibitem[\protect\citeauthoryear{de~Leeuw and Lange}{de~Leeuw and
  Lange}{2009}]{sharp-mm}
de~Leeuw, J. and Lange, K. (2009), \enquote{Sharp quadratic majorization in one
  dimension,} {\em Computational Statistics \& Data Analysis\/}, 53,
  2471--2484.

\bibitem[\protect\citeauthoryear{Hoerl and Kennard}{Hoerl and
  Kennard}{1970}]{ridge}
Hoerl, A.~E. and Kennard, R.~W. (1970), \enquote{Ridge regression: Biased
  estimation for nonorthogonal problems,} {\em Technometrics\/}, 12, 55--67.

\bibitem[\protect\citeauthoryear{Huber}{Huber}{1992}]{M-estimate}
Huber, P.~J. (1992), \enquote{Robust estimation of a location parameter,} in
  {\em Breakthroughs in Statistics\/}, Springer, 492--518.

\bibitem[\protect\citeauthoryear{Keys, Zhou, and Lange}{Keys
  et~al.}{2019}]{ProxDist}
Keys, K.~L., Zhou, H., and Lange, K. (2019), \enquote{Proximal distance
  algorithms: Theory and practice,} {\em The Journal of Machine Learning
  Research\/}, 20, 2384--2421.

\bibitem[\protect\citeauthoryear{Landeros, Padilla, Zhou, and Lange}{Landeros
  et~al.}{2020}]{fused-dist}
Landeros, A., Padilla, O. H.~M., Zhou, H., and Lange, K. (2020),
  \enquote{Extensions to the proximal distance method of constrained
  optimization,} {\em arXiv preprint arXiv:2009.00801\/}.

\bibitem[\protect\citeauthoryear{Lange}{Lange}{2016}]{lange2016mm}
Lange, K. (2016), {\em MM Optimization Algorithms\/}, SIAM.

\bibitem[\protect\citeauthoryear{Lange, Hunter, and Yang}{Lange
  et~al.}{2000}]{lange2000optimization}
Lange, K., Hunter, D.~R., and Yang, I. (2000), \enquote{Optimization transfer
  using surrogate objective functions,} {\em Journal of Computational and
  Graphical Statistics\/}, 9, 1--20.

\bibitem[\protect\citeauthoryear{Liu, Chi, Lin, Lange, and Chi}{Liu
  et~al.}{2022}]{l2e_package}
Liu, X., Chi, J., Lin, L., Lange, K., and Chi, E. (2022), {\em L2E: Robust
  Structured Regression via the L2 Criterion\/}. {R} package version 2.0 ---
  For new features, see the 'Changelog' file (in the package source).

\bibitem[\protect\citeauthoryear{Lozano, Meinshausen, and Yang}{Lozano
  et~al.}{2016}]{MD-lasso}
Lozano, A.~C., Meinshausen, N., and Yang, E. (2016), \enquote{Minimum distance
  lasso for robust high-dimensional regression,} {\em Electronic Journal of
  Statistics\/}, 10, 1296--1340.

\bibitem[\protect\citeauthoryear{Nguyen and Tran}{Nguyen and
  Tran}{2012}]{Extended-Lasso}
Nguyen, N.~H. and Tran, T.~D. (2012), \enquote{Robust lasso with missing and
  grossly corrupted observations,} {\em IEEE Transactions on Information
  Theory\/}, 59, 2036--2058.

\bibitem[\protect\citeauthoryear{{R Core Team}}{{R Core Team}}{2020}]{R}
{R Core Team} (2020), {\em R: A Language and Environment for Statistical
  Computing\/}, R Foundation for Statistical Computing, Vienna, Austria,
  \urlprefix\url{https://www.R-project.org/}.

\bibitem[\protect\citeauthoryear{Rousseeuw and Leroy}{Rousseeuw and
  Leroy}{2005}]{rousseeuw2005robust}
Rousseeuw, P.~J. and Leroy, A.~M. (2005), {\em Robust Regression and Outlier
  Detection\/}, John Wiley \& Wons.

\bibitem[\protect\citeauthoryear{Scott}{Scott}{1992}]{scott1992}
Scott, D.~W. (1992), {\em Multivariate Density Estimation: Theory, Practice,
  and Visualization\/}, John Wiley \& Sons.

\bibitem[\protect\citeauthoryear{Scott}{Scott}{2001}]{L2E-Scott}
--- (2001), \enquote{Parametric statistical modeling by minimum integrated
  square error,} {\em Technometrics\/}, 43, 274--285.

\bibitem[\protect\citeauthoryear{Scott and Wang}{Scott and
  Wang}{2021}]{scott2021robust}
Scott, D.~W. and Wang, Z. (2021), \enquote{Robust multiple regression,} {\em
  Entropy\/}, 23, 88.

\bibitem[\protect\citeauthoryear{Seijo and Sen}{Seijo and Sen}{2011}]{cvxreg}
Seijo, E. and Sen, B. (2011), \enquote{Nonparametric least squares estimation
  of a multivariate convex regression function,} {\em The Annals of
  Statistics\/}, 39, 1633--1657.

\bibitem[\protect\citeauthoryear{She and Owen}{She and Owen}{2011}]{She2011}
She, Y. and Owen, A.~B. (2011), \enquote{Outlier detection using nonconvex
  penalized regression,} {\em Journal of the American Statistical
  Association\/}, 106, 626--639.

\bibitem[\protect\citeauthoryear{Tibshirani}{Tibshirani}{1996}]{Lasso}
Tibshirani, R. (1996), \enquote{Regression shrinkage and selection via the
  lasso,} {\em Journal of the Royal Statistical Society: Series B
  (Methodological)\/}, 58, 267--288.

\bibitem[\protect\citeauthoryear{Van~Ruitenburg}{Van~Ruitenburg}{2005}]{van2005}
Van~Ruitenburg, J. (2005), \enquote{Algorithms for parameter estimation in the
  Rasch model,} {\em Measurement and Research Department Reports\/}, 4, 116.

\bibitem[\protect\citeauthoryear{Vansina and De~Greve}{Vansina and
  De~Greve}{1982}]{star-data}
Vansina, F. and De~Greve, J. (1982), \enquote{Close binary systems before and
  after mass transfer,} {\em Astrophysics and Space Science\/}, 87, 377--401.

\bibitem[\protect\citeauthoryear{Wang, Li, and Jiang}{Wang
  et~al.}{2007}]{LAD-Lasso}
Wang, H., Li, G., and Jiang, G. (2007), \enquote{Robust regression shrinkage
  and consistent variable selection through the LAD-Lasso,} {\em Journal of
  Business \& Economic Statistics\/}, 25, 347--355.

\bibitem[\protect\citeauthoryear{Wang, Jiang, Huang, and Zhang}{Wang
  et~al.}{2013}]{ES-Lassp}
Wang, X., Jiang, Y., Huang, M., and Zhang, H. (2013), \enquote{Robust variable
  selection with exponential squared loss,} {\em Journal of the American
  Statistical Association\/}, 108, 632--643.

\bibitem[\protect\citeauthoryear{Warwick and Jones}{Warwick and
  Jones}{2005}]{warwick2005}
Warwick, J. and Jones, M. (2005), \enquote{Choosing a robustness tuning
  parameter,} {\em Journal of Statistical Computation and Simulation\/}, 75,
  581--588.

\bibitem[\protect\citeauthoryear{Xu, Lange, and Chi}{Xu
  et~al.}{2017}]{dist-to-set}
Xu, J., Lange, K., and Chi, E. (2017), \enquote{Generalized linear model
  regression under distance-to-set penalties,} in {\em Advances in Neural
  Information Processing Systems\/}.

\bibitem[\protect\citeauthoryear{Yang, Lozano, and Aravkin}{Yang
  et~al.}{2018}]{trim-general}
Yang, E., Lozano, A.~C., and Aravkin, A. (2018), \enquote{A general family of
  trimmed estimators for robust high-dimensional data analysis,} {\em
  Electronic Journal of Statistics\/}, 12, 3519--3553.

\bibitem[\protect\citeauthoryear{Zhang et~al.}{Zhang et~al.}{2010}]{MCP}
Zhang, C.-H. et~al. (2010), \enquote{Nearly unbiased variable selection under
  minimax concave penalty,} {\em The Annals of Statistics\/}, 38, 894--942.

\end{thebibliography}

\end{document}